\newif\if@restonecol
\begin{document}

\title{Theoretical design and circuit implementation of integer domain chaotic systems}

\author{Qianxue Wang, Simin Yu, Christophe Guyeux,\\ Jacques Bahi, and Xiaole Fang}

\maketitle

\begin{abstract}
In this paper, a new approach for constructing integer domain chaotic systems (IDCS) is proposed, and its chaotic behavior is mathematically proven according to the Devaney's definition of chaos. Furthermore, an analog-digital hybrid circuit is also developed for realizing the designed basic IDCS. In the IDCS circuit design, chaos generation strategy is realized through a sample-hold circuit and a decoder circuit so as to convert the uniform noise signal into a random sequence, which plays a key role in circuit implementation. The experimental observations further validate the proposed systematic methodology for the first time.
\end{abstract}

\section{Introduction}
Currently, international research works on chaotic systems and their 
applications are mainly concentrated on real domain, leading to the 
study and development of the so called Real Domain Chaotic Systems 
(RDCSs). RDCSs are divided in two categories: continuous-time and 
discrete-time chaotic systems. Continuous-time systems, including the 
Lorenz, Chen and Chua systems 
\cite{lorent1963deterministic,chua1986double,chen1999yet} are defined by 
state (differential) equations, while discrete time chaotic systems are 
given by iterative equations, such as Logistic or Henon map, Chen-Lai 
algorithm \cite{may1976simple,zheng2008pseudo,chen1997making}, and so on.

When a chaotic phenomenon is implemented either in digital computers or in
 some other digital devices, its associated chaotic system is discretized
 both spatially and temporally. That is, the system becomes both a 
discrete-time and discrete-valued pseudo chaotic system, defined in 
discrete time and on finite spatial lattice \cite{dachselt2001chaos}. 
Finite words lengths in the digital machines lead to finite precision 
effects, and may result in consequent dynamic degradation, such as 
short cycle-length, non-ideal distribution and correlation, low linear 
complexity, and so on \cite{li2005dynamical}. Since the early 
implementation of such systems in finite state machines, many researchers 
have done several improvements to overcome the problems arising from the 
digitized chaotic systems, like using higher finite precision 
\cite{lin1991chaos}, perturbation-based algorithms to effectively compensate 
the dynamics degradation \cite{tao1998perturbance}, and cascading multiple 
chaotic systems to obtain a greater period \cite{heidari1994chaotic}, but 
they have not achieved to solve the problem fundamentally.

Integer Domain Chaotic Systems (IDCSs), for their part, refer to chaotic 
systems defined on an integer domain. Their main feature is essentially to 
solve the problem of dynamics degradation caused by finite precision 
effects. In 2010, the research team at the University of Franche-Comt\'e 
(France) has proposed a new IDCS designated as CI (Chaotic Iterations) 
system~\cite{guyeux10}. This CI system uses only bitwise operations, thus 
achieving the speed requirement. Furthermore, theoretical analyses show that these CI systems on integer domains satisfy the Devaney's definition of chaos. Since these systems run on finite sets of integer domains, then the finite precision problem disappears, and there is no need of any transformation from real numbers to binary sequences. CI system is one of effective solutions for the aforementioned problems that occur in the RDCS case.

The first collaborative work has consisted in chaotically combine two random inputs in order to construct a first CI system, called PRIM CI in \cite{bcgw11:ip}, which has led to better statistical properties for the resulted pseudorandom number generator than each input taken alone. A second category of CI systems called MARK CI has then been introduced in \cite{bfgw11:ij}: a mark sequence has been applied to avoid wasteful duplication of values, leading by doing so to an obvious speed improvement. The LUT (Lookup-Table) CI has finally be released and deeply studied in \cite{bahi2014suitability}: this last version of the chaotic combination of two input entropic streams has solved flaws exhibited by the MARK CI version of these IDCSs.

In this article, a novel approach for generating IDCS and its proof of the 
existence of chaos according to Devaney's definition is presented. We focus on the design and circuit implementation of IDCS, with theoretical background and practical details presented for the first time. The IDCS circuit design consists of uniform noise signal generator, noise voltage converter, sample and hold circuit, decoder circuit, iterate function circuit, and digital to analog converter six parts together. The main feature of this kind of IDCS circuit is the use of a sample-hold circuit and a decoder circuit to convert the uniform noise signal into a random sequence, which plays a key role in generating IDCS signals.

The remainder of this research work is organized as follows. 
The description of IDCS is given in Section~\ref{Chaotic Systems}, 
while the proof of chaos is provided in 
Section~\ref{The proof of the existence of chaos for IDCS}. Circuit 
design and implementation of IDCS are detailed in the
next section. This article ends by a conclusion 
section in which the contributions
are summarized (Section~\ref{Conclusion}).

\section{Description of IDCS}
\label{Chaotic Systems}
In this section, we first introduce the basic concept of IDCSs.

\subsection{RDCS}

In the traditional RDCS studies, the general form of the iterative equations is:
$$x_0\in\mathds{R}, \text{ and } \forall n \in \mathds{N}^*, x_n=f(x_{n-1}),$$
where $f:\mathds{R}\rightarrow \mathds{R}$ is the iteration function, %$n=1,2,\ldots$, 
while $x_{n-1}$ and $x_n$ are the $n-1$-th and $n$-th iteration respectively. Note that  $x_{n-1}$ and $x_n$ are real numbers, which are represented in binary form as
\begin{equation}
\left\{
\begin{array}{l}
x_{n-1}={(x_{i_1}x_{i_2}\ldots x_{i_M}.x_{j_1}x_{j_2}\ldots x_{j_N}\ldots)}_2\\
x_n={(x_{k_1}x_{k_2}\ldots x_{k_L}.x_{l_1}x_{l_2}\ldots x_{l_P}\ldots)}_2
\end{array}%
\right.
\end{equation}
where $x_{i_1},x_{i_2},\ldots ,x_{i_M} \in \{0,1\}$  and $x_{j_1},x_{j_2},\ldots ,x_{j_N},\ldots \in \{0,1\}$ are respectively the integer and fractional parts for $x_{n-1}$. Similarly, $x_{k_1},x_{k_2},\ldots ,x_{k_L}  \in  \{0,1\}$ and $x_{l_1},x_{l_2},\ldots ,x_{l_P},\ldots \in \{0,1\}$ are respectively the integer and fractional parts for $x_n$.

The main features of discrete-time RDCS is that all the bits in 
$x_{n-1}$ will be updated by iteration function $f$ at each operation 
(iteration). Likewise, all the bits in  $x_n$ will be updated by iteration 
function $f$ at each operation (iteration).

\subsection{IDCS}
%In previous work, we have done some primary research and analysis on IDCS. For example, some CI systems are proposed. 
The main ideas of CI systems are summarized thereafter.

Let $N\in\{1,2,\ldots\}$ be a positive integer,  
$\mathds{B}=\{1,0\}$ denotes the set of 
binary numbers, while $\mathds{B}^N$ is the set of binary vectors 
of size $N$. For any $n=0,1,2,\ldots$, $x^n$ is represented by using $N$ bits in base-2: $x^0=(x^0_{N-1}x^0_{N-2}\ldots x^0_{0})\in \mathds{B}^N$ is the initial condition, while $x^{n-1}=(x^{n-1}_{N-1}x^{n-1}_{N-2}\ldots x^{n-1}_{0})\in \mathds{B}^N$  and $x^n=(x^n_{N-1}x^n_{N-2}\ldots x^n_{0})\in \mathds{B}^N$  denote the $n-1$-th and $n$-th iteration respectively. In CI systems, the iterative equation is defined as follows:
 \begin{equation}
x^n_i=\left\{
\begin{array}{lll}
x^{n-1}_i & \text{if} & i\neq s^n \\
{(f(x^{n-1}))}_i & \text{if} & i=s^n ,
\end{array}%
\right.
\end{equation}       
where  $i=0,1,2,\ldots,N-1$, $n=1,2,\ldots$, and  $s=(s^1 s^2\ldots s^n\ldots)$ is an one-sided infinite sequence of integers bounded by $N-1$: $\forall n \in \mathds{N}^*$, $s^n \in\{0,1,2,\ldots,N-1\}$. % is a random integer in the $n$-th iteration
Additionally, the iterate function $f$  is usually the vectorial Boolean negation, given by
$f(x^{n-1})=( \overline{x^{n-1}_{N-1}} ~  \overline{x^{n-1}_{N-2}} \ldots  \overline{x^{n-1}_{i}} \ldots  \overline{x^{n-1}_{0}})$, and the following notation is used:
${(f(x^{n-1}))}_{i=s^n}={( \overline{x^{n-1}_{N-1}} ~ \overline{x^{n-1}_{N-2}}\ldots \overline{x^{n-1}_{i}}\ldots \overline{x^{n-1}_{0}})}_{i=s^n}= \overline{x^{n-1}_{i=s^n}}$, that is, ${(f(x^{n-1}))}_{i=s^n}$ is the $i$-th component of $f(x^{n-1})$. Let us finally remark that, in IDCS, the one-sided infinite sequence of integers  $s=(s^1 s^2\ldots s^n\ldots)$ is usually named a chaotic strategy.

Let $x_k$, $x_j$ be two binary digits, the corresponding distance be
 \begin{equation}
\delta(x_j,x_k)=\left\{
\begin{array}{lll}
1 & \text{if} & x_j\neq x_k \\
0 & \text{if} & x_j=x_k.
\end{array}%
\right.
\end{equation} 
% In IDCS, let  $N=1,2,\ldots$, $s=(s^1 s^2\ldots s^n\ldots)$ is an one-sided infinite sequence of  integers, where each random integers satisfy $s^n\in\{0,1,2,\ldots,N-1\}(n=1,2,\ldots)$. 
Using the same notations as above, we define the binary variables negation as follows:
\begin{equation}
\label{BVN}
\left(F_f(k,x)\right)_j=  x_j\cdotp\delta(k,j) + (f(x))_k\cdotp \overline{\delta (k,j)},
\end{equation}
where  $j\in\{0,1,2,\ldots,N-1\}$, and $k$ is usually a term of chaotic strategy $s$ %^n(n=1,2,\ldots)$, $x=(x_{N-1}x_{N-2}\ldots x_0)$ and ${(f(x))}_k={( \overline{x_{N-1}}\ldots \overline{x_{k}}\ldots \overline{x_{0}})}_k= \overline{x_k}$
while $f$ is often the vectorial negation recalled previously.
With these choices, and according to Equation~\ref{BVN}, a more specific formula can be obtained:
$$F_f(k,x)=(x_{N-1}, x_{N-2}, \ldots, x_{k+1}, \overline{x_k}, x_{k-1}, \ldots, x_1, x_0).$$

Let $E=(s,x)$ be a couple constituted by a chaotic strategy and a Boolean vector, that is,
$E=(s,x) \in \mathcal{E}=\{0,1,2,\ldots,N-1\}^{\infty}\times \mathds{B}^N$. Define function $G_f$ as follows: 
\begin{equation}
\label{GF}
G_f(E)=G_f((s,x))=(\sigma(s),F_f(i(s),x)),
\end{equation}
where $i(s)= s^1$ and $\sigma^k(s)=\underbrace{\sigma\circ\sigma\circ\ldots\circ\sigma(s)}\limits_{k}, k=1,2,\ldots$ is the result for shifting $k$ integers in the one-sided infinite sequence $s=(s^1 s^2\ldots s^n\ldots)$ to the left. In other words,
$$\sigma^k(s)=s^{k+1}s^{k+2}\ldots s^n\ldots(k=1,2,\ldots).$$
%
%In Equation~\ref{GF}, $i(s)=s^k(k=1,2,\ldots)$ is equal to the last integer be left shifted in one-sided infinite sequence $s$.
%
%Let $E^0=(s^0,x^0)\in E$ be initial value, $E^k=(s^k,x^k)\in E$ and $E^{k+1}=(s^{k+1},x^{k+1})\in E$ are the $k$-th and $k+1$-th iterations respectively. 
With all this material, IDCS is defined as follows:
$$E^0 \in \mathcal{E} \text{ and } \forall k \in \mathds{N}, E^{k+1}=G_f(E^k).$$

Consider now two real numbers $a$ and $b$, lesser than $1$, which are represented in radix-$r$ format  as
 \begin{equation}
\left\{
\begin{array}{lll}
a= & 0.a_1a_2a_3\ldots a_n\ldots & = \sum_{k=1}^\infty \frac{a_k}{r_k}\\
b= & 0.b_1b_2b_3\ldots b_n\ldots & = \sum_{k=1}^\infty \frac{b_k}{r_k},
\end{array}%
\right.
\end{equation} 
where $a_k,b_k\in\{0,1,2,\ldots,r-1\}$.  Then the distance between $a$ and $b$ is given by:
\begin{equation}
\label{RD}
d(a,b)=\sum_{k=1}^\infty \frac{|a_k-b_k|}{r^k}
\end{equation} 

The above formula can be generalized to calculate the distance between two one-sided infinite sequences of symbols without loss of generality.
These remarks lead to the definition of a new distance on the set $\mathcal{E}$, 
which is defined by:
$$d((s,x),(\hat{s},\hat{x}))=d_s(s,\hat{s})+d_x(x,\hat{x}),$$
where $s=(s^1 s^2\ldots s^n\ldots)$ and $\hat{s}=(\hat{s}^1 \hat{s}^2\ldots \hat{s}^n\ldots)$ are one-sided infinite sequences of  integers, while $x$ and $\hat{x}$ 
are binary digits of $N$ bits. 
More precisely, and in agreement with Equation~\ref{RD}, the distance between 
$s$ and $\hat{s}$ is:
\begin{equation}
\label{IDS}
d_s(s,\hat{s})=\sum_{k=1}^\infty \frac{|s^k-\hat{s}^k|}{N^k}\in[0,1]
\end{equation} 
where $\forall k \in \mathds{N}^\ast$, $s^k,\hat{s}^k\in\{0,1,2,\ldots,N-1\}$.
Finally, following the 1-norm distance, the distance between $x$ and $\hat{x}$ is:
\begin{equation}
\label{IDX}
d_x(x,\hat{x})=\sum_{k=0}^{N-1} \delta(x_k,\hat{x}_k)\in\{0,1,2,\ldots,N\} .
\end{equation} 
Remark that $d$ is a distance, as it is defined as the sum of two distances.

Before investigating the chaotic properties of IDCS, we have to prove that 
$G_f$ is continuous on the metric space $(\mathcal{E},d)$. In order to do so, 
the following lemma is  first introduced:
\begin{lemma}
\label{lemma1}
 Let  $s=(s^1 s^2\ldots s^n\ldots)$ and $\hat{s}=(\hat{s}^1 \hat{s}^2\ldots \hat{s}^n\ldots)$, where $s^k,\hat{s}^k\in\{0,1,2,\ldots,N-1\}$ for $k=1,2,\ldots$. If $s^i=\hat{s}^i$ for $i=1,2,\ldots,n$, then $d(s,\hat{s})\leqslant \frac{1}{N^n}$. Conversely, if $d(s,\hat{s})\leqslant \frac{1}{N^n}$, then $s^i=\hat{s}^i$ for $i=1,2,\ldots,n$.
\end{lemma}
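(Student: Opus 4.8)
The plan is to establish the two implications separately; in both cases the statement about symbol-wise agreement translates, through the series $d(s,\hat{s})=\sum_{k=1}^{\infty}|s^k-\hat{s}^k|/N^k$ of Equation~\ref{IDS} (here $d$ abbreviates the sequence distance $d_s$), into a one-line estimate. Throughout one may assume $N\geqslant 2$: if $N=1$ the only admissible symbol is $0$, so there is a single sequence and both directions hold vacuously.

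For the forward implication I would assume $s^i=\hat{s}^i$ for $i=1,\dots,n$. Then the first $n$ summands of the defining series are zero, so $d(s,\hat{s})=\sum_{k=n+1}^{\infty}|s^k-\hat{s}^k|/N^k$. Since every symbol lies in $\{0,1,\dots,N-1\}$ we have $|s^k-\hat{s}^k|\leqslant N-1$, and summing the geometric tail gives
$$d(s,\hat{s})\;\leqslant\;(N-1)\sum_{k=n+1}^{\infty}\frac{1}{N^{k}}\;=\;(N-1)\cdot\frac{N^{-(n+1)}}{1-N^{-1}}\;=\;\frac{1}{N^{n}},$$
which is exactly the asserted bound. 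This half is routine.

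For the converse I would argue by contraposition. Suppose $s^i\neq\hat{s}^i$ for at least one $i\in\{1,\dots,n\}$ and let $j$ be the least such index. Because every term of the series is nonnegative, only the $j$-th term need be retained:
$$d(s,\hat{s})\;\geqslant\;\frac{|s^{j}-\hat{s}^{j}|}{N^{j}}\;\geqslant\;\frac{1}{N^{j}}\;\geqslant\;\frac{1}{N^{n}},$$
the final step using $j\leqslant n$. Hence a disagreement among the first $n$ symbols forces $d(s,\hat{s})\geqslant 1/N^{n}$, which is the contrapositive of the claim.

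I do not expect a genuine obstacle; the only point that needs care is the boundary case, where the chain above yields $\geqslant 1/N^{n}$ rather than a strict inequality. Indeed, for $N=2$, $n=1$, $s=(1,0,0,\dots)$ and $\hat{s}=(0,0,0,\dots)$ one has $d(s,\hat{s})=1/2=1/N^{1}$ while $s^{1}\neq\hat{s}^{1}$, so the converse is literally correct only when its hypothesis is read as the strict inequality $d(s,\hat{s})<1/N^{n}$ (equivalently, under $\leqslant$, with the conclusion weakened to agreement on the first $n-1$ coordinates). Either reading is immediate from the displayed estimates and is all that the later continuity and Devaney-chaos arguments require.
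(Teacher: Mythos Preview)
Your argument is essentially identical to the paper's: the forward direction is the same geometric-tail estimate, and the converse is the same contraposition via the single-term lower bound $d(s,\hat{s})\geqslant 1/N^{j}\geqslant 1/N^{n}$. Your remark on the boundary case is correct and in fact sharper than the paper, whose own contraposition likewise only delivers $d\geqslant 1/N^{n}$ under a disagreement and hence really proves the strict-inequality form of the converse; as you note, this is all that the subsequent continuity and Devaney-chaos arguments use.
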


\begin{proof}
 If $s^i=\hat{s}^i~(i=1,2,\ldots,n)$, then
 $$\begin{aligned}d(s,\hat{s})&=\sum_{i=1}^n \frac{|s^i-\hat{s}^i|}{N^i}+\sum_{i=n+1}^\infty \frac{|s^i-\hat{s}^i|}{N^i} =\sum_{i=n+1}^\infty \frac{|s^i-\hat{s}^i|}{N^i}\\
 &\leq\sum_{i=n+1}^\infty \frac{N-1}{N^i}=(N-1)\frac{\frac{1}{N^{n+1}}}{1-\frac{1}{N}}=\frac{1}{N^n}\end{aligned}$$
Conversely, and due to the definition of the proposed distance: for any $m\leqslant n$, if  $s^m\neq\hat{s}^m$, then $d(s,\hat{s})\geqslant \frac{1}{N^n}$. The contraposition is the desired result: if $d(s,\hat{s})\leqslant \frac{1}{N^n}$, then $s^i=\hat{s}^i(i=1,2,\ldots,n)$.
 \end{proof}
 
%  \begin{color}{red}
%  The importance of this lemma is that we can quickly determine whether two sequences are close to each other on an intuitive description. 
% %  For example, if the two sequences are close to each other, as long as they have quite many previous elements to be consistent.
% %  
%  \begin{theorem}
%   $\sigma:\{0,1,\hdots,N-1\}^{\mathds{N}^\ast}\rightarrow \{0,1,\hdots,N-1\}^{\mathds{N}^\ast}$ is continuous in one-sided infinite sequences of  integers, where $\sigma$ is the result for shifting one integer in the one-sided infinite sequence $s=(s^1 s^2\ldots s^n\ldots)$ to the left. In other words, $\sigma (s) = s^2s^3\ldots s^n\ldots$
%  \end{theorem}
% 

To prove that chaotic iterations are an example of chaos, 
we must first set that $G_{f}$ is continuous on the metric
space $(\mathcal{E},d)$.

\begin{theorem}
$G_f$ is a continuous function.
\end{theorem}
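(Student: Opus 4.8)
The plan is to prove continuity of $G_f$ directly from the $\varepsilon$--$\delta$ definition on the product metric space $(\mathcal{E},d)$, exploiting the fact that $d = d_s + d_x$ and that $d_x$ takes only integer values. Fix a point $E=(s,x)\in\mathcal{E}$ and let $\varepsilon>0$ be given. Since $G_f(E)=(\sigma(s),F_f(s^1,x))$, I must exhibit a $\delta>0$ such that $d(E,\hat{E})<\delta$ forces $d(G_f(E),G_f(\hat{E}))<\varepsilon$. The natural move is to choose an integer $n$ large enough that $\tfrac{1}{N^{n}}<\tfrac{\varepsilon}{2}$ (say), and then take $\delta$ small enough — concretely $\delta = \tfrac{1}{N^{n+1}}$ or smaller, and in any case $\delta<1$ — so that $d(E,\hat{E})<\delta$ simultaneously pins down the first several strategy terms and forces the Boolean parts to coincide.

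The key steps, in order, are as follows. First, since $d_x(x,\hat{x})\in\{0,1,\dots,N\}$ and $d_x(x,\hat{x})\le d(E,\hat{E})<\delta<1$, conclude $d_x(x,\hat{x})=0$, i.e. $x=\hat{x}$. Second, apply Lemma~\ref{lemma1} to $d_s$: since $d_s(s,\hat{s})\le d(E,\hat{E})<\delta\le \tfrac{1}{N^{n+1}}$, the lemma yields $s^i=\hat{s}^i$ for $i=1,2,\ldots,n+1$. In particular $s^1=\hat{s}^1$, so that $F_f(s^1,x)=F_f(\hat{s}^1,\hat{x})$, which gives $d_x$-distance $0$ between the Boolean components of $G_f(E)$ and $G_f(\hat{E})$. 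Third, handle the shifted strategies: $\sigma(s)$ and $\sigma(\hat{s})$ agree on their first $n$ terms (because $s,\hat{s}$ agreed on the first $n+1$), so by the forward direction of Lemma~\ref{lemma1}, $d_s(\sigma(s),\sigma(\hat{s}))\le \tfrac{1}{N^{n}}$. Fourth, add the two contributions: $d(G_f(E),G_f(\hat{E})) = d_s(\sigma(s),\sigma(\hat{s})) + d_x(F_f(s^1,x),F_f(\hat{s}^1,\hat{x})) \le \tfrac{1}{N^{n}} + 0 < \varepsilon$, as desired. Since $E$ was arbitrary, $G_f$ is continuous on $(\mathcal{E},d)$.

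I do not anticipate a serious obstacle here; the argument is essentially bookkeeping once Lemma~\ref{lemma1} is in hand. The one point that needs care is the interplay between the threshold on $d_s$ and the threshold on $d_x$: because the total distance $d$ bounds each summand but the summands live on different scales (one in $[0,1]$, one in $\{0,\dots,N\}$), I must choose $\delta$ small enough to serve both roles at once — forcing $x=\hat{x}$ requires only $\delta<1$, but controlling $\sigma(s)$ to within $\varepsilon$ requires $\delta$ on the order of $N^{-(n+1)}$ with $N^{-n}<\varepsilon$. Taking $\delta=\min\{1,\,N^{-(n+1)}\}$ with $n$ chosen so that $N^{-n}<\varepsilon$ resolves this cleanly. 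A secondary subtlety is the (trivial) observation that $F_f(k,\cdot)$ depends on $x$ only through finitely many bits and on $k$ as a discrete parameter, so equality of inputs gives equality of outputs with zero distance — no $\varepsilon$ is lost in the Boolean coordinate.
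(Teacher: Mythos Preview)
Your proof is correct and follows essentially the same strategy as the paper's: use the integrality of $d_x$ to force $x=\hat{x}$ once the total distance is below $1$, then invoke Lemma~\ref{lemma1} in both directions to control the strategy coordinates before and after the shift $\sigma$. The only cosmetic difference is that the paper phrases the argument via sequential continuity (taking a sequence $((s,x)_n)\to(\hat{s},\hat{x})$ and showing $G_f((s,x)_n)\to G_f(\hat{s},\hat{x})$) rather than the direct $\varepsilon$--$\delta$ formulation you use.
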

\begin{proof}
A  continuous function is a function for which, intuitively, "small" changes in the input result in "small" changes in the output. Let $((s,x)_n)_{n\in \mathds{N}}$ be a sequence of the phase space $%
\mathcal{E}$, which converges to $(\hat{s},\hat{x})$. We will prove that $(
G_{f}(s,x)_n) _{n\in \mathds{N}}$ converges to $
G_{f}(\hat{s},\hat{x}) $. In mathematical notation, $\forall ((s,x)_n)_{n\in \mathds{N}}\subset \mathcal{E}:\lim\limits_{n\to\infty}(s,x)_n=(\hat{s},\hat{x})\Rightarrow \lim\limits_{n\to\infty}G_f((s,x)_n=G_f(\hat{s},\hat{x})$
\begin{enumerate}
 \item  $\lim\limits_{n\to\infty}(s,x)_n=(\hat{s},\hat{x})\Rightarrow d((s,x)_n,(\hat{s},\hat{x}))<\delta$
 
 Without loss of generality, we assume that $\delta < 1$.
 \item If $(x)_n\neq\hat{x}$, then $d_x((x)_n,\hat{x})\geqslant 1$, and so $d((s,x)_n,(\hat{s},\hat{x}))=d_s((s)_n,\hat{s})+d_x((x)_n,\hat{x})>\delta$. Thus $\exists n_{0}\in \mathds{N}$,$d_{x}((x)_n,\hat{x})=0$ for any $n\geqslant n_{0}$
 \item As $d((s,x)_n,(\hat{s},\hat{x}))<\delta$ making 
$$d((s,x)_n,(\hat{s},\hat{x}))=d_s((s)_n,\hat{s})+d_x((x)_n,\hat{x})=d_s((s)_n,\hat{s})<\delta$$
According to the previous Lemma 1,  if the $k_0$ first elements of $(s)_n$ and $\hat{s}$  are the same, then $d_s(\hat{s},\tilde{s})<N^{-k_0}<\delta$. For instance, $k_0=floor(-log_N\delta)+1$ is convenient.  Thus $\exists n_{1}\in \mathds{N}$, $d_{s}((s)_n,\hat{s})<\delta$ for any $n\geqslant n_{1}$
 \item According to Equation \ref{GF}, the corresponding  $G_f((s,x)_n)$ and $G_f(\hat{s},\hat{x})$ can be obtained:
 $$G_f((s,x)_n)=(\sigma((s)_n),F_f(i((s)_n),(x)_n))$$
 $$G_f(\hat{s},\hat{x})=(\sigma(\hat{s}),F_f(i(\hat{s}),\hat{x}))$$
For $n\geqslant max(n_0,n_1)$, the $k_0$ first elements of $(s)_n$ and $\hat{s}$  are the same and  $(x)_n=\hat{x}$, so $$i((s)_n)=i(\hat{s})$$

Then, $$F_f(i((s)_n),(x)_n)=F_f(i(\hat{s}),\hat{x})$$.

$\sigma(s)$ is the result for shifting one integer in the one-sided infinite sequence to the left. So the $k_0-1$ first elements of $\sigma((s)_n)$ and $\sigma(\hat{s})$  are still the same.

So $$d(G_f((s,x)_n),G_f(\hat{s},\hat{x}))=d((\sigma((s)_n),\sigma(\hat{s})+d(F_f(i((s)_n),(x)_n),F_f(i(\hat{s}),\hat{x}))=d((\sigma((s)_n),\sigma(\hat{s})<N^{-(k_0-1)}$$ make 
$$\lim\limits_{n\to\infty}G_f((s,x)_n=G_f(\hat{s},\hat{x})$$
true.

\end{enumerate}

In conclusion,

$G_{f}$ is consequently continuous.
% 
% Let  $s=(s^1 s^2\ldots s^n\ldots)$ and $\hat{s}=(\hat{s}^1 \hat{s}^2\ldots \hat{s}^n\ldots)$, $\forall\varepsilon>0$,$k_0$ always can be found to make the relation $\frac{1}{N^{k_0}}<\varepsilon$ ture. According to existing $k_0$, there is $\delta=\frac{1}{N^{k_0+1}}$. For any sequence $s=(s^1 s^2\ldots s^n\ldots)$ satisfy the condition $d(s,\hat{s})<\delta$, there is $s^i=\hat{s^i}(i=1,2,\ldots,k_0+1)$, which means the previous $k_0+1$ elements are the same in the two sequences. $\sigma(s)$ and $\sigma(\hat{s})$ are the results for shifting one integer in the one-sided infinite sequences $s=(s^1 s^2\ldots s^n\ldots)$ and $\hat{s}=(\hat{s}^1 \hat{s}^2\ldots \hat{s}^n\ldots)$ to the left respectively. Hence  the previous $k_0$ elements in $\sigma(s)$ and $\sigma(\hat{s})$  are the same, then, $d(\sigma(s),\sigma(\hat{s}))\leq\frac{1}{N^{k_0}}<\varepsilon$. All of which had proved  $\sigma$ is continuous.

\end{proof}
\section{Proof of chaos for IDCS}
 \label{The proof of the existence of chaos for IDCS}

In this section, the chaotic behavior of IDCS is proven according to the Devaney's definition recalled below.
\begin{definition}[Devaney's definition of chaos~\cite{Dev89}]
\label{Devaney}
Let $f:\mathcal{X} \rightarrow \mathcal{X}$ be a continuous function on the metrical space
$(\mathcal{X},d)$. The dynamical system $x_0\in \mathcal{X}, x_{n+1}=f(x_n)$ is said chaotic if:
    \begin{enumerate}
\item its periodic points are dense in $\mathcal{X}$;
\item it is transitive;
\item it has sensitive dependence on initial conditions.
\end{enumerate}
\end{definition}

The meaning of these properties are detailed thereafter. Let us recall before that.
\begin{theorem}\cite{Banks92}
\label{banks}
 If  a dynamical system is transitive and has dense periodic points, then it has sensitive dependence on initial conditions.
\end{theorem}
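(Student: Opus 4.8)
The statement is the classical theorem of Banks \emph{et al.}, and the plan is to reproduce its proof: from transitivity and density of periodic points one manufactures a single \emph{sensitivity constant} $\delta>0$ that works uniformly over the whole space. Recall that $f$ has sensitive dependence on initial conditions when there is $\delta>0$ such that for every $x\in\mathcal{X}$ and every $\varepsilon>0$ there exist $y\in\mathcal{X}$ with $d(x,y)<\varepsilon$ and an iterate $n$ with $d(f^n(x),f^n(y))>\delta$.

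\emph{Building $\delta$.} First I would observe that the underlying space (here $\mathcal{E}$, which is uncountable, hence infinite) together with density of the periodic points forces the existence of two periodic points $p_1,p_2$ whose orbits $\mathcal{O}(p_1),\mathcal{O}(p_2)$ are disjoint — otherwise the dense set of periodic points would be a single finite orbit, forcing $\mathcal{X}$ to be finite. Set $\delta=\tfrac18\,d\bigl(\mathcal{O}(p_1),\mathcal{O}(p_2)\bigr)>0$. A triangle-inequality estimate then yields the key uniform fact: for every $x\in\mathcal{X}$ at least one of $p_1,p_2$, call it $q=q(x)$, has its whole orbit far from $x$, that is $d(f^i(q),x)\geq 4\delta$ for all $i\geq 0$.

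\emph{The core argument.} Fix $x\in\mathcal{X}$ and, without loss of generality, $\varepsilon\in(0,\delta)$, and put $U=B(x,\varepsilon)$. By density of periodic points choose a periodic point $p\in U$ of some period $n\geq 1$, and let $q=q(x)$ be the far periodic point above. Using continuity of $f$, the finite intersection $V=\bigcap_{i=0}^{n-1}f^{-i}\bigl(B(f^i(q),\delta)\bigr)$ is open and nonempty (it contains $q$). Topological transitivity, in the usual open-set formulation $f^k(U)\cap V\neq\emptyset$, then yields $y\in U$ and an index $k$ with $f^k(y)\in V$. Now choose $j$ with $nj\geq k$ and $m:=nj-k\in\{0,\dots,n-1\}$: on one side $f^{nj}(p)=p$, so $d(f^{nj}(p),x)<\varepsilon<\delta$; on the other side $f^{nj}(y)=f^m\!\bigl(f^k(y)\bigr)\in B(f^m(q),\delta)$ by the defining property of $V$, while $d(f^m(q),x)\geq 4\delta$. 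A triangle-inequality estimate gives $d(f^{nj}(p),f^{nj}(y))>4\delta-\delta-\delta=2\delta$, whence $d(f^{nj}(x),f^{nj}(p))+d(f^{nj}(x),f^{nj}(y))>2\delta$, so one of $p,y$ — call it $z$ — satisfies $d(f^{nj}(x),f^{nj}(z))>\delta$ while $z\in U$ gives $d(x,z)<\varepsilon$. Since $\delta$ depends on neither $x$ nor $\varepsilon$, this is exactly sensitive dependence.

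\emph{Main obstacle.} There is essentially no computation — only repeated triangle inequalities and a two-way pigeonhole between $p$ and $y$ — so the delicate points are structural rather than technical. The first is that the construction of $\delta$ genuinely requires $\mathcal{X}$ to be infinite (on a single periodic orbit the conclusion is false), which must be checked in the case at hand; for $\mathcal{E}=\{0,\dots,N-1\}^{\infty}\times\mathds{B}^N$ this is immediate. The second, and the part that needs care, is the bookkeeping with the exponent $nj$: it must simultaneously return $p$ to itself \emph{and} push $f^k(y)$ only $m\leq n-1$ steps further along the orbit of $q$, which is exactly why $V$ is taken as the intersection of the pulled-back $\delta$-balls around $f^0(q),\dots,f^{n-1}(q)$ rather than around $q$ alone; this guarantees that $f^{nj}(y)$ stays within $\delta$ of some point of $\mathcal{O}(q)$ no matter what the residue $m$ turns out to be.
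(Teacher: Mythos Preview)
Your proof is correct and is precisely the standard argument of Banks, Brooks, Cairns, Davis, and Stacey. One minor bookkeeping point worth stating explicitly is that the interval $\{k,k+1,\dots,k+n-1\}$ contains exactly one multiple of $n$, which is what guarantees a choice of $j$ with $m=nj-k\in\{0,\dots,n-1\}$; your intersection $V$ runs over $i=0,\dots,n-1$, so this matching is essential and you have it right.

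However, the paper does not prove this theorem at all: it is stated with the citation \cite{Banks92} and immediately used as a black box, so that after establishing density of periodic points and transitivity for $G_f$ the authors simply invoke it to conclude Devaney chaos. In that sense there is no ``paper's own proof'' to compare against --- you have supplied a full proof where the authors chose to quote the literature. Your added remark that the argument requires $\mathcal{X}$ to be infinite (and that $\mathcal{E}$ obviously is) is a genuine point the paper leaves implicit.
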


 \subsection{Dense periodic points}
 \begin{theorem}
The periodic points of $G_f$ are dense in $\mathcal{E}$.
\end{theorem}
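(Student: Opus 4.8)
The plan is to show that every point of $\mathcal{E}$ is a limit of periodic points of $G_f$. So fix an arbitrary $(s,x)\in\mathcal{E}$ and an arbitrary $\varepsilon>0$; without loss of generality assume $\varepsilon<1$. Since the Boolean component of the distance, $d_x$, takes values in $\{0,1,\ldots,N\}$, any point within distance $\varepsilon<1$ of $(s,x)$ must have $x$ as its Boolean vector. This dictates the shape of the periodic point we look for: its Boolean part will be exactly $x$, and only its strategy will be perturbed.

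Next I would fix an integer $k_0\in\mathds{N}^\ast$ large enough that $N^{-k_0}<\varepsilon$ (for instance $k_0=\lfloor -\log_N\varepsilon\rfloor+1$), and build a \emph{periodic} strategy $\tilde{s}$ that agrees with $s$ on its first $k_0$ terms. The key observation about $G_f$ is that one application shifts the strategy one step to the left and negates the single bit whose index is the first term of the strategy; hence, by an immediate induction, $G_f^k(s,x)=\bigl(\sigma^k(s),\,y\bigr)$, where $y$ is obtained from $x$ by negating, for each $i\in\{0,\ldots,N-1\}$, the $i$-th bit as many times as $i$ occurs in the list $s^1,s^2,\ldots,s^k$. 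Because negating a bit twice is the identity, $y=x$ as soon as every symbol occurs an even number of times in that list. I would therefore take $\tilde{s}$ to be the infinite repetition of the length-$2k_0$ block
$$B=\bigl(s^1,s^2,\ldots,s^{k_0},\,s^1,s^2,\ldots,s^{k_0}\bigr),$$
in which every symbol of $\{0,\ldots,N-1\}$ occurs an even number of times. (An equally valid choice is to append to $s^1,\ldots,s^{k_0}$ the finitely many indices occurring an odd number of times among them, and repeat that shorter block.)

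It then remains to check two things. First, $(\tilde{s},x)$ is a periodic point: with $T=2k_0$ we have $\sigma^{T}(\tilde{s})=\tilde{s}$ since $\tilde{s}$ has period $T$, and the Boolean part returns to $x$ by the parity remark above, so $G_f^{T}(\tilde{s},x)=(\tilde{s},x)$. Second, $(\tilde{s},x)$ is $\varepsilon$-close to $(s,x)$: by construction $\tilde{s}^i=s^i$ for $i=1,\ldots,k_0$, so Lemma~\ref{lemma1} gives $d_s(s,\tilde{s})\leqslant N^{-k_0}$, whence
$$d\bigl((s,x),(\tilde{s},x)\bigr)=d_s(s,\tilde{s})+d_x(x,x)=d_s(s,\tilde{s})\leqslant N^{-k_0}<\varepsilon.$$
Since $(s,x)$ and $\varepsilon$ were arbitrary, the periodic points of $G_f$ are dense in $\mathcal{E}$.

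The only real subtlety — the hard part — is engineering the periodic strategy so that it simultaneously (i) copies $s$ on a prefix long enough to force the two strategies close, and (ii) produces, over one full period, a net bit-flip equal to the identity so that the Boolean vector is genuinely fixed; the parity trick (repeating the prefix block, or padding with the odd-multiplicity indices) is exactly what makes both demands compatible. Everything else is the routine verification that $G_f$ acts as a shift-plus-single-negation and the mechanical application of Lemma~\ref{lemma1}.
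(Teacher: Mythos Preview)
Your proof is correct and follows essentially the same strategy as the paper: force the Boolean component to be $x$, copy a long enough prefix of $s$, and extend it to a periodic strategy whose period block leaves the Boolean vector unchanged. The only difference is cosmetic: the paper splits into two cases---if the prefix $s^1,\ldots,s^{k_0}$ already returns $x$ it repeats that block, and otherwise it appends the indices $j_1,\ldots,j_{i_0}$ of the mismatched bits before repeating---whereas your doubling trick handles both cases at once (and your parenthetical alternative is exactly the paper's second case).
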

\begin{proof}
We want to show that, for any given $\varepsilon >0$, a periodic point $(\tilde{s},\tilde{x})\in \mathcal{E}$ %that satisfies $d((\hat{s},\hat{x}),(\tilde{s},\tilde{x}))<\varepsilon$, 
can always be found within range $\varepsilon$ of any point $(\hat{s},\hat{x}) \in \mathcal{E}$. % in the metric space.
\begin{enumerate}
 \item Without loss of generality, we assume that the given $\varepsilon < 1$ and that the general form of $(\hat{s},\hat{x})$ is
 $$(\hat{s},\hat{x})=((s^1 s^2\ldots s^{k_0}\ldots s^n\ldots),\hat{x})\in \mathcal{E}$$
 \item If $\tilde{x}\neq\hat{x}$, then $d_x(\tilde{x},\hat{x})\geqslant 1$, and so $d((\hat{s},\hat{x}),(\tilde{s},\tilde{x}))>1$. Thus % If $d((\hat{s},\hat{x}),(\tilde{s},\tilde{x}))<\varepsilon$, 
  $\tilde{x}=\hat{x}$. %\Rightarrow d_x(\tilde{x},\hat{x})=0$ must first be met. 
 \item If the $k_0$ first elements of $\hat{s}$ and $\tilde{s}$ are the same, then $d_s(\hat{s},\tilde{s})<N^
 {-k_0}$ according to the previous Lemma 1. So, $\forall \varepsilon<1$, an integer $k_0$ can always be found making the relation $d_s(\hat{s},\tilde{s})<N^
 {-k_0}<\varepsilon$ true. For instance, $k_0=floor(-log_N\varepsilon)+1$ is convenient. 
 \item If after $k_0$-th iteration, we have
 $$\tilde{x}=\hat{x}=G_f^{k_0}((\tilde{s},\tilde{x}))_2,$$
then a cycle point $(\tilde{s},\tilde{x})=((s^1 s^2\ldots s^{k_0}s^1s^2\ldots s^{k_0}\ldots),\tilde{x})\in \mathcal{E}$ is found  that satisfies
$$(\tilde{s},\tilde{x})=G_f^{k_0}((\tilde{s},\tilde{x})),$$
making 
$$d((\hat{s},\hat{x}),(\tilde{s},\tilde{x}))=d_s(\hat{s},\tilde{s})+d_x(\hat{x},\tilde{x})=d_s(\hat{s},\tilde{s})<\varepsilon$$
true.
 \item  If after $k_0$-th iteration, we have
 $$\tilde{x}\neq\hat{x}=G_f^{k_0}((\tilde{s},\tilde{x}))_2$$
Then, without loss of generality, we can assume that there are $i_0 ~(\leqslant N)$ different bits between $\hat{x}$ and $\tilde{x}$. These $i_0$ bits  are numbered $j_1<j_2<\ldots<j_{i_0}$ respectively. To obtain that, after another $k_0$ iterations, the following condition is met 
 $$\tilde{x}=\hat{x}=G_f^{k_0+i_0}((\tilde{s},\tilde{x}))_2 ,$$
we must set: 
 \begin{equation}
\left\{
\begin{array}{ll}
s^{k_0+1} & = j_1\\
s^{k_0+2} & = j_2\\
\ldots & \\
s^{k_0+i_0} & = j_{i_0} .
\end{array}%
\right.
\end{equation} 
Then,  within range $\varepsilon$ of the point $(\hat{s},\hat{x})$, one can find the following periodic point 
$$(\tilde{s},\tilde{x})=((s^1s^2\ldots s^{k_0} s^{k_0+1} s^{k_0+2} \ldots  s^{k_0+i_0} s^1s^2\ldots s^{k_0} s^{k_0+1} s^{k_0+2} \ldots  s^{k_0+i_0} \ldots),\tilde{x})=G_f^{k_0+i_0}(\tilde{s},\tilde{x})\in \mathcal{E}$$
making 
$$d((\hat{s},\hat{x}),(\tilde{s},\tilde{x}))=d_s(\hat{s},\tilde{s})+d_x(\hat{x},\tilde{x})=d_s(\hat{s},\tilde{s})<\varepsilon$$
true.
\end{enumerate}
In summary, the periodic points of $G_f$ are dense in $\mathcal{E}$.
\end{proof}

 \subsection{Transitive property}
\begin{theorem}
 $G_f$ is a transitive map on $\mathcal{E}$. 
\end{theorem}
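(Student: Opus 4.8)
The plan is to verify topological transitivity directly from the definition: for every pair of nonempty open sets $U,V\subseteq\mathcal{E}$ there must be an integer $n$ with $G_f^n(U)\cap V\neq\emptyset$. Equivalently, I will show that for any two points $(s,x),(\hat{s},\hat{x})\in\mathcal{E}$ and any $\varepsilon\in(0,1)$ there is a point $(\check{s},\check{x})$ with $d((\check{s},\check{x}),(s,x))<\varepsilon$ whose orbit under $G_f$ eventually reaches --- in fact exactly hits --- $(\hat{s},\hat{x})$. A point landing exactly on $(\hat{s},\hat{x})$ lies in every neighbourhood of it, so choosing $(s,x)\in U$ with an $\varepsilon$-ball inside $U$ and $(\hat{s},\hat{x})\in V$ immediately gives an orbit starting in $U$ and entering $V$.

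First I would fix $\varepsilon<1$ and, exactly as in the density proof, set $k_0=\lfloor -\log_N\varepsilon\rfloor+1$, so that --- by Lemma~\ref{lemma1} together with the fact that equal Boolean parts force $d_x=0$ --- any element of $\mathcal{E}$ whose Boolean component equals $x$ and whose strategy agrees with $s$ on its first $k_0$ terms lies in the $\varepsilon$-ball around $(s,x)$. I would then build $\check{s}$ in three consecutive blocks: the prefix $s^1s^2\ldots s^{k_0}$ (to stay $\varepsilon$-close to $(s,x)$), a finite "bridge" block $b^1b^2\ldots b^{i_0}$ to be chosen below, and then the entire sequence $\hat{s}$ appended; and I set $\check{x}=x$. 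By construction $d((\check{s},\check{x}),(s,x))=d_s(\check{s},s)<N^{-k_0}<\varepsilon$.

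Next I would compute $G_f^{k_0}(\check{s},\check{x})$: by Equation~\ref{GF} its strategy part is the shift $\sigma^{k_0}(\check{s})=(b^1\ldots b^{i_0}\hat{s}^1\hat{s}^2\ldots)$, and its Boolean part is a vector $y=F_f(s^{k_0},\cdots F_f(s^2,F_f(s^1,x))\cdots)\in\mathds{B}^N$ obtained from $x$ by $k_0$ single-bit negations. Letting $j_1<j_2<\ldots<j_{i_0}$ be the coordinates on which $y$ and $\hat{x}$ differ, I set $b^1=j_1,\ldots,b^{i_0}=j_{i_0}$ with $i_0\leqslant N$. Since $F_f(k,\cdot)$ negates exactly the $k$-th bit, applying these $i_0$ further steps turns $y$ into $\hat{x}$, and the strategy part becomes $\sigma^{k_0+i_0}(\check{s})=(\hat{s}^1\hat{s}^2\ldots)=\hat{s}$; hence $G_f^{k_0+i_0}(\check{s},\check{x})=(\hat{s},\hat{x})$. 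Taking $n=k_0+i_0$ completes the argument.

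The only genuine subtlety --- the "hard part" --- is the bridging step: one must be certain that from the intermediate Boolean state $y$ one can always reach the prescribed target $\hat{x}$ through finitely many \emph{admissible} moves of the dynamics. This is exactly where the choice of $f$ as the vectorial negation and the one-bit update rule $F_f$ enter: flipping the differing coordinates one at a time is always a legal single step and the process terminates in at most $N$ iterations, so no obstruction can occur. Everything else is the same bookkeeping with Lemma~\ref{lemma1} already used for the density of periodic points, so I would keep that part brief.
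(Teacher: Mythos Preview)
Your proposal is correct and follows essentially the same route as the paper: pick a point in the first open set, copy the first $k_0$ terms of its strategy to stay inside the ball, append a bridge block $j_1,\ldots,j_{i_0}$ flipping the bits where the intermediate Boolean state differs from the target, and then concatenate the target's strategy so that $G_f^{k_0+i_0}$ lands exactly on $(\hat{s},\hat{x})$. The only cosmetic difference is that the paper splits into the cases $i_0=0$ and $i_0>0$, whereas you treat them uniformly; otherwise the construction and the use of Lemma~\ref{lemma1} are identical.
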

\begin{proof}
 The so-called topological transitivity %in metric space $(G_f,E)$, 
 specifically refers to that: for any nonempty open sets $U_A$ and $U_B$ in $(\mathcal{E},d)$, %(point $(s_A,x_A)$ as the center and $r_A$ as the sphere radius)  in $\mathcal{E}$ (point $(s_B,x_B)$ as the center and $r_B$ as the sphere radius), 
 there is always $n_0>0$ that makes $G_f^{n_0}(U_A)\cap U_B\neq \varnothing$. 

Consider now two nonempty open sets $U_A$ and $U_B$, and $(s_A,x_A)\in U_A$, $(s_B,x_B)\in U_B$. 
$U_A$ and $U_B$ are open, and we take place in a metric space, so there exist 
real numbers $r_A>0$ and $r_B>0$ such that the  
open ball $\mathcal{B}_A$ of center $(s_A,x_A)$ and radius $r_A$ is inside $U_A$ (resp. the open ball $\mathcal{B}_b$ of center $(s_B,x_B)$ and radius $r_B$ is into $U_B$). 
Without loss of generality, we can suppose that $r_A<1$.
 \begin{enumerate}
   \item We introduce the following notations:
   $$(s_A,x_A)=((s_A^1s_A^2\ldots s_A^{n_0}\ldots s_A^n\ldots ),x_A)\in U_A\subseteq \mathcal{E}$$
%the center point $(s_B,x_B)$ of open set $U_B$ is
 and $$(s_B,x_B)=((s_B^1s_B^2\ldots s_B^n\ldots ),x_B)\in U_B\subseteq \mathcal{E}.$$
   \item Let $(\tilde{s},\tilde{x})\in U_A$. If $\tilde{x}\neq x_A$, then $d_x(\tilde{x},x_A)\geqslant 1$, and so $d(( s_A, x_A),(\tilde{s},\tilde{x}))>1$. Consequently,
   if $(\tilde{s},\tilde{x})\in \mathcal{B}_A$, then $d(( s_A, x_A),(\tilde{s},\tilde{x}))<r_A<1$, and so $\tilde{x}= x_A$. %\Rightarrow d_x(\tilde{x},x_A)=0$ must first be met.
\item If we demand that the $k_0$ first elements of $\tilde{s}$ are equal to those from $ s_A$, then we obtain $d_s( s_A,\tilde{s})<N^
 {-k_0}$. And for the given $r_A$, an integer $k_0$ (that is, a sequence $\tilde{s}$) can always be found
 to achieve $d_s( s_A,\tilde{s})<N^
 {-k_0}<r_A$ (for instance, $k_0=floor(-log_Nr_A)+1$). % just be taken to meet $k>-log_Nr_A$. 
 \item If after $k_0$ iterations, the following condition is satisfied:
 $$G_f^{k_0}((s_A,x_A))_2=x_B,$$
then $n_0=k_0$ and $(\tilde{s},\tilde{x})=((s_A^1s_A^2\ldots s_A^{n_0} s_B^1s_B^2\ldots s_B^n\ldots ),x_A)\in U_A$ has been found that satisfy:
$$G_f^{n_0}(\tilde{s},\tilde{x})=(s_B,x_B)\in G_f^{n_0}(U_A)\cap U_B,$$
making 
$$G_f^{n_0}(U_A)\cap U_B\neq \varnothing$$
true.
 \item  If, after the $k_0$-th iteration,
 $$G_f^{k_0}((s_A,x_A))_2\neq x_B,$$
then, without loss of generality, we can assume that there are $i_0 ~(\leqslant N)$ different bits between $ x_B$ and the Boolean vector of $G_f^{k_0}((s_A,x_A))$. Once again, these $i_0 $ bits  are numbered $j_1<j_2<\ldots<j_{i_0}$ respectively. 
Define now, 
 \begin{equation}
\left\{
\begin{array}{ll}
 {s}^{k_0+1} & = j_1\\
 {s}^{k_0+2} & = j_2\\
\ldots & \\
 {s}^{k_0+i_0} & = j_{i_0}
\end{array}%
\right.
\end{equation} 
% So, after $k_0+i_0$  iterations, the following condition is met: 
%  $$G_f^{k_0+i_0}(\tilde{s},\tilde{x})=(s_B,x_B).$$
% may be further set. Then $n_0=k_0+i_0$ 
so the point
$(\tilde{s},\tilde{x})=((s_A^1s_A^2\ldots s_A^{k_0} j_1 j_2 \ldots  j_{i_0} s_B^1s_B^2\ldots s_B^n\ldots ),x_A)\in U_A$ satisfies  
$$G_f^{n_0}(\tilde{s},\tilde{x})=(s_B,x_B)\in G_f^{n_0}(U_A)\cap U_B$$
with $n_0=k_0+i_0$ , making the claim 
$$G_f^{n_0}(U_A)\cap U_B\neq \varnothing$$
true.
 \end{enumerate}
 In summary,  $(G_f,\mathcal{E})$ is transitive, as shown in 
Figure~\ref{transitivity}. 
\begin{figure}[!htb]
\centering
\includegraphics[width=2.5in]{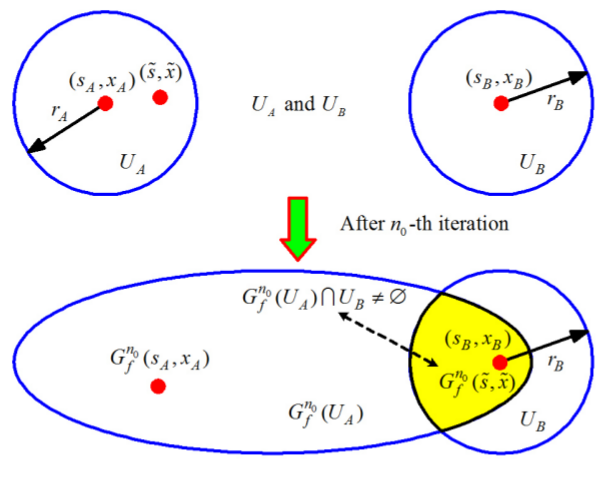}
\DeclareGraphicsExtensions.
\caption{The schematic diagram of transitivity in $(G_f,\mathcal{E})$}
\label{transitivity}
\end{figure} 
\end{proof} 
Because of dense periodic points and transitive, according to 
Definition~\ref{Devaney} and Theorem~\ref{banks}, IDCS is chaotic in the 
sense of Devaney.
\section{Circuit Implementation of IDCS} 
\label{Circuit Implementation of IDCS} 
In this section, IDCS circuit is designed, which consists of several 
sub-modules: uniform noise signal generator,   noise voltage converter, 
sample-hold circuit, decoder circuit, iterate function circuit, 
and digital to analog  converter. Finally, both the validity and 
practicability are verified by the experimental results.

The uniform noise signal generator is provided in Figure~\ref{unsg}. 
It uses MM5837 broadband white-noise generator with $3 dB$ per octave 
filter from $10 Hz$ to $40 kHz$ to give noise output $\xi(t)$, which has 
flat spectral distribution over entire audio band from $20 Hz$ to 
$20 kHz$. Output is about $1 V_{P-P}$ of noise riding on $8.5 V$ level. 
The parameters of components in Figure~\ref{unsg} are： 
\begin{figure}[!htb]
\centering
\includegraphics[width=3in]{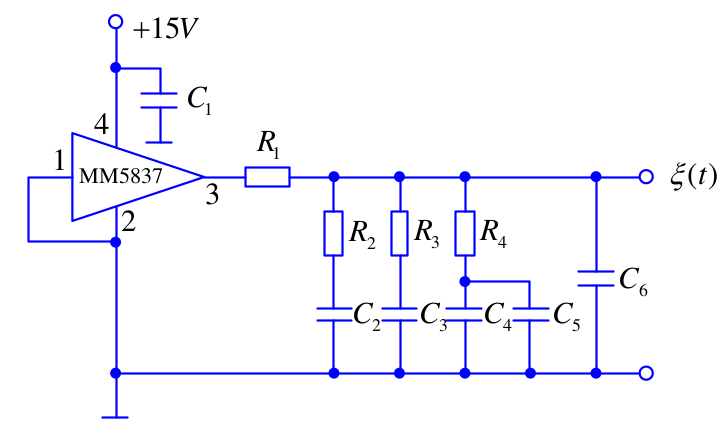}
\DeclareGraphicsExtensions.
\caption{The uniform noise signal generator}
\label{unsg}
\end{figure}
capacitances $C_1=100\mu F$, $C_2=1\mu F$, $C_3=0.27\mu F$, 
$C_4=C_5=0.047\mu F$, $C_1=0.033\mu F$, and resistances $R_1=6.8k\Omega$, 
$R_2=3k\Omega$, $R_3=1k\Omega$, and $R_4=300\Omega$.
In Figure~\ref{unsg}, output of uniform noise signal generator is about $1 V_{P-P}$ of noise riding on $8.5 V$ level, so it should be converted to $0\sim 4V$ uniform noise signal. Noise voltage converter is shown in Figure ~\ref{nvc}. The values of each resistance in Figure~\ref{nvc} are $R_5=R_6=R_8=R_9=10k\Omega$, $R_7=40k\Omega$. The noise output $\xi(t)=0\sim 4V$.
\begin{figure}[!htb]
\centering
\includegraphics[width=3.3in]{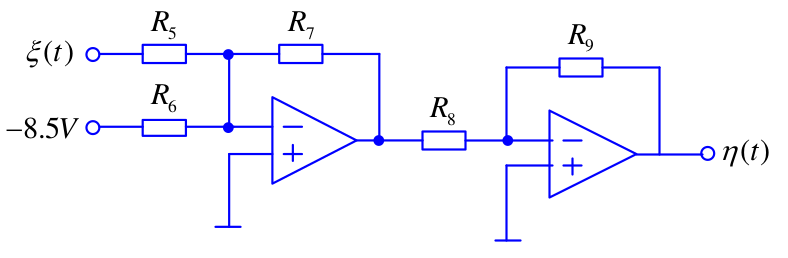}
\DeclareGraphicsExtensions.
\caption{The noise voltage converter}
\label{nvc}
\end{figure}
Sample-hold circuit is shown in Figure~\ref{shc}, in which the chip model is 
LF398. Supply voltage is $V_+=+15V$, $V_-=-15V$. In Figure~\ref{shc}, 
3-pin is for analog signal input, 5-pin is an output; capacitor 
$C_F=0.01\sim 0.1\mu F$ ($0.022\mu F$ is used here). $u_c$ is a square 
wave signal with frequency $1\sim 5kHz$ ($4kHz$ is used here), the
 amplitude of output is $-5V\sim 5V$. 
\begin{figure}[!htb]
\centering
\includegraphics[width=3.3in]{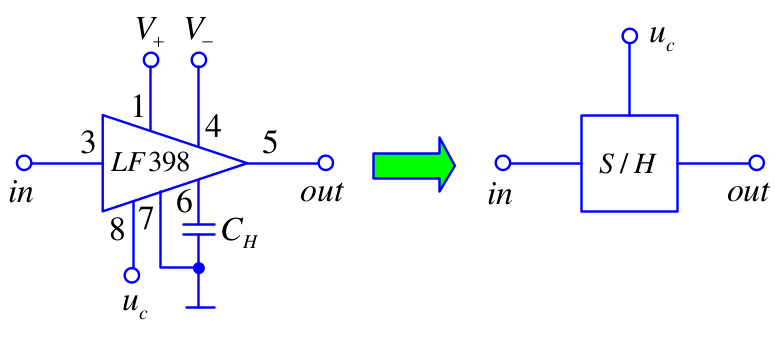}
\DeclareGraphicsExtensions.
\caption{The sample-hold circuit}
\label{shc}
\end{figure}
Notice that when $C_F$ is enlarged, then the frequency of $u_c$ is
reduced, and so the iterations are more slower. Conversely, if $C_F$ is smaller, then the frequency of $u_c$ may be higher, and so the speed of iteration is faster. Due to the speed of the device itself, the speed of iteration has an upper limitation. When doing experiments, $C_F$ should be a suitable value, the same for the frequency of $u_c$. That prevents work abnormality.
Decoding circuit is shown in Figure~\ref{dc1}, while the corresponding comparator circuit is described in Figure~\ref{dc2}. The values for each resistance are  $R_{10}=13.5k\Omega$, $R_{11}=1k\Omega$, $R_{12}=10k\Omega$, $R_{13}=40\Omega$, and $R_{14}=R_{15}=R_{16}=10k\Omega$, while the voltage for inverting voltage shifter is $E=4V$. According to Figure~\ref{dc2}, the logical relationship for input and output of the comparator is as follows:
\begin{figure}[!htb]
\centering
\includegraphics[width=3.6in]{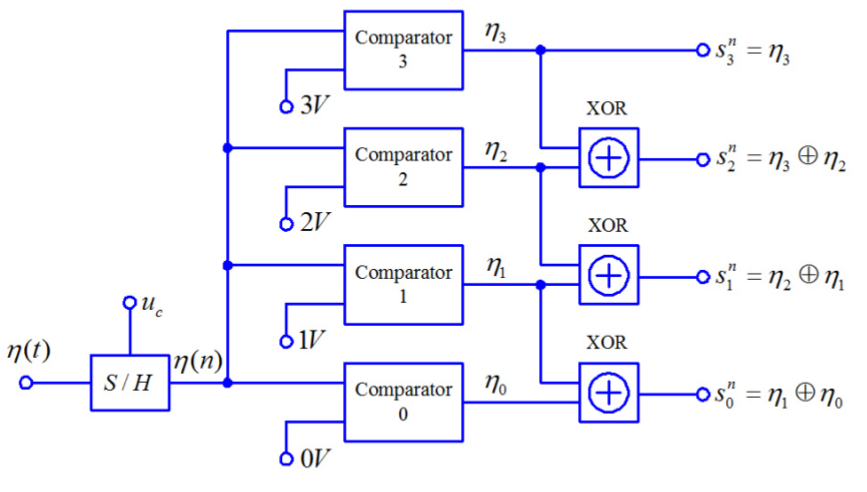}
\DeclareGraphicsExtensions.
\caption{The decoder circuit}
\label{dc1}
\end{figure}
\begin{figure}[!htb]
\centering
\includegraphics[width=5in]{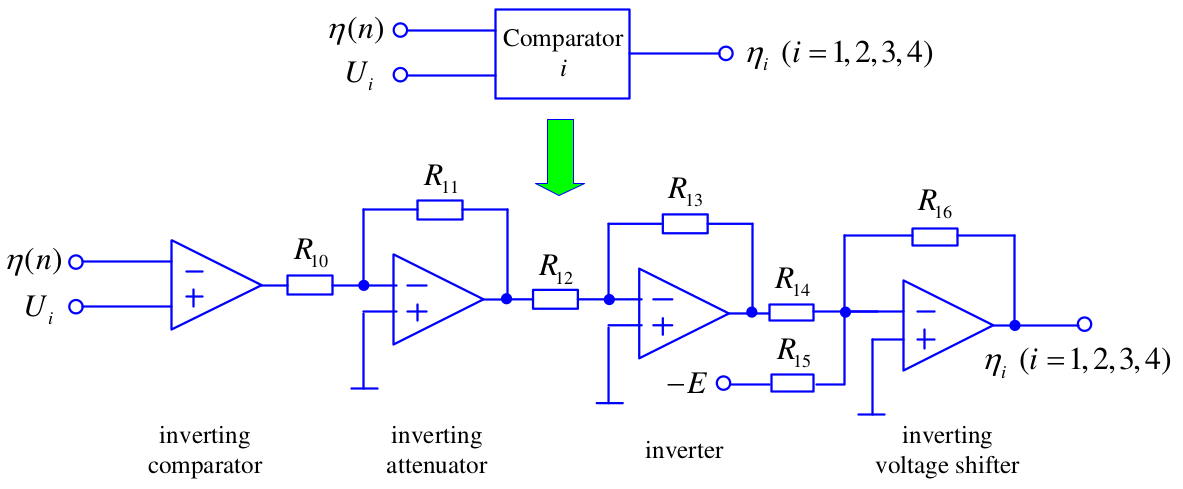}
\DeclareGraphicsExtensions.
\caption{The comparator}
\label{dc2}
\end{figure}
\begin{equation}
\left\{
\begin{array}{ll}
 \text{if } \eta(n)>U_i, & \text{then } \eta_i=1~(4V),\\
 \text{if } \eta(n)<U_i, & \text{then } \eta_i=0~(0V).
\end{array}%
\right.
\end{equation}
According to Figure~\ref{dc1}, input-output relationship of the decoding circuit  is: 
\begin{enumerate}
 \item  When $3V<\eta(t)\leqslant 4V$, then $\eta_3=\eta_2=\eta_1=\eta_0=1$, so
 \begin{equation}
\left\{
\begin{array}{lllllll}
s^n_3	&=	&\eta_3=1	&	&	&	&\\
s^n_2	&=	&\eta_3\oplus\eta_2	&=	&1\oplus1	&=	&0\\
s^n_1	&=	&\eta_2\oplus\eta_1	&=	&1\oplus1	&=	&0\\
s^n_0	&=	&\eta_1\oplus\eta_0	&=	&1\oplus1	&=	&0
\end{array}%
\right.
\label{Formula1}
\end{equation}
 \item  When $2V<\eta(t)\leqslant 3V$, then $\eta_3=0, \eta_2=\eta_1=\eta_0=1$, so
 \begin{equation}
\left\{
\begin{array}{lllllll}
s^n_3	&=	&\eta_3=0	&	&	&	&\\
s^n_2	&=	&\eta_3\oplus\eta_2	&=	&0\oplus1	&=	&1\\
s^n_1	&=	&\eta_2\oplus\eta_1	&=	&1\oplus1	&=	&0\\
s^n_0	&=	&\eta_1\oplus\eta_0	&=	&1\oplus1	&=	&0
\end{array}%
\right.
\end{equation}
 \item  When $1V<\eta(t)\leqslant 2V$, then $\eta_3=\eta_2=0,\eta_1=\eta_0=1$, so
 \begin{equation}
\left\{
\begin{array}{lllllll}
s^n_3	&=	&\eta_3=0	&	&	&	&\\
s^n_2	&=	&\eta_3\oplus\eta_2	&=	&0\oplus0	&=	&0\\
s^n_1	&=	&\eta_2\oplus\eta_1	&=	&0\oplus1	&=	&1\\
s^n_0	&=	&\eta_1\oplus\eta_0	&=	&1\oplus1	&=	&0
\end{array}%
\right.
\end{equation}
 \item  When $0V<\eta(t)\leqslant 1V$, then $\eta_3=\eta_2=\eta_1=0,\eta_0=1$, so
 \begin{equation}
\left\{
\begin{array}{lllllll}
s^n_3	&=	&\eta_3=0	&	&	&	&\\
s^n_2	&=	&\eta_3\oplus\eta_2	&=	&0\oplus0	&=	&0\\
s^n_1	&=	&\eta_2\oplus\eta_1	&=	&0\oplus0	&=	&0\\
s^n_0	&=	&\eta_1\oplus\eta_0	&=	&0\oplus1	&=	&1
\end{array}%
\right.
\end{equation}
\end{enumerate}

It can be seen in Figure~\ref{nvc} that the noise output satisfies $0V<\eta(t)\leqslant 4V$, and $\eta(t)$ is  the random signal that follows an equal probability distribution (i.e., uniform distribution) within the range $[0V,4V]$. In other words, the values in these four intervals ($[0V,1V]$,$[1V,2V]$,$[2V,3V]$,$[3V,4V]$) ​​are uniformly distributed, and the correspondence relationship between the size of $s^n$ and the four intervals is:
 \begin{equation}
\left\{
\begin{array}{ll}
 \text{if } \eta(t)\in[0V,1V], & \text{then } s^n=0\\
 \text{if } \eta(t)\in[1V,2V], & \text{then } s^n=1\\
 \text{if } \eta(t)\in[2V,3V], & \text{then } s^n=2\\
 \text{if } \eta(t)\in[3V,4V], & \text{then } s^n=3 
\end{array}%
\right.
\end{equation}
\begin{figure}[!htb]
\centering
\includegraphics[width=3in]{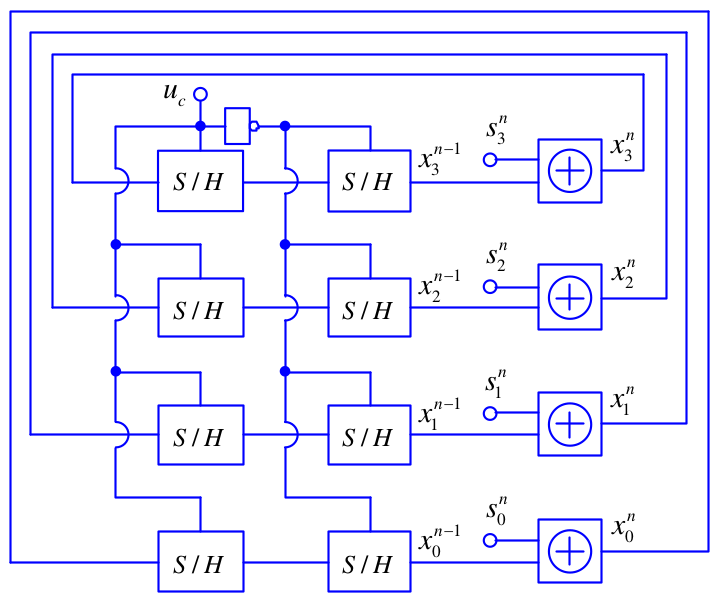}
\DeclareGraphicsExtensions.
\caption{The iterate function circuit}
\label{ifc}
\end{figure}
\begin{figure}[!htb]
\centering
\includegraphics[width=5in]{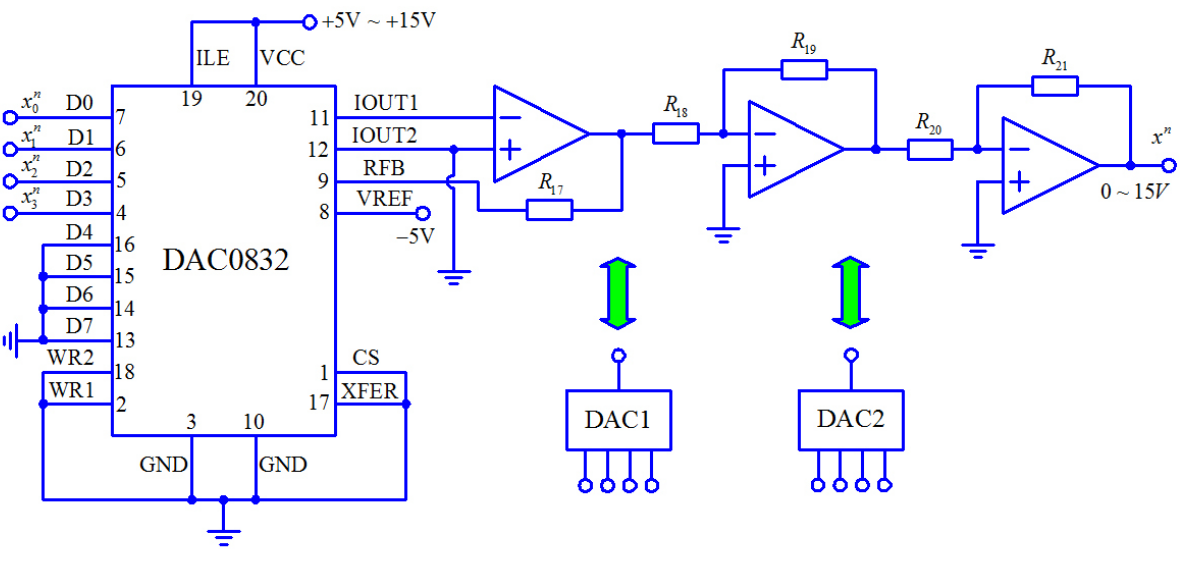}
\DeclareGraphicsExtensions.
\caption{The digital to analog  converter}
\label{dtac}
\end{figure}
Through the above comparison, it is known that both $s^n$ and $s^n_3s^n_2s^n_1s^n_0$ follow the uniform distribution, the relationship between them satisfies:
 \begin{equation}
\left\{
\begin{array}{ll}
 \text{if } \eta(t)\in[0V,1V], & \text{then } s^n=0 \Leftrightarrow s^n_3s^n_2s^n_1s^n_0=0001\\
 \text{if } \eta(t)\in[1V,2V], & \text{then } s^n=1 \Leftrightarrow s^n_3s^n_2s^n_1s^n_0=0010\\
 \text{if } \eta(t)\in[2V,3V], & \text{then } s^n=2 \Leftrightarrow s^n_3s^n_2s^n_1s^n_0=0100\\
 \text{if } \eta(t)\in[3V,4V], & \text{then } s^n=3 \Leftrightarrow s^n_3s^n_2s^n_1s^n_0=1000
\end{array}%
\right.
\end{equation}
Set $N=4$, get basic iterative function for IDCS, 
 \begin{equation}
x_i^n=\left\{
\begin{array}{ll}
x_i^{n-1} & \text{if } i\neq s^n\\
(f(x^{n-1}))_i=\overline{x_i^{n-1}} & \text{if } i=s^n
\end{array}%
\right.
\label{Formula7}
\end{equation}

\begin{figure}[!htb]
\centering
\includegraphics[width=6in]{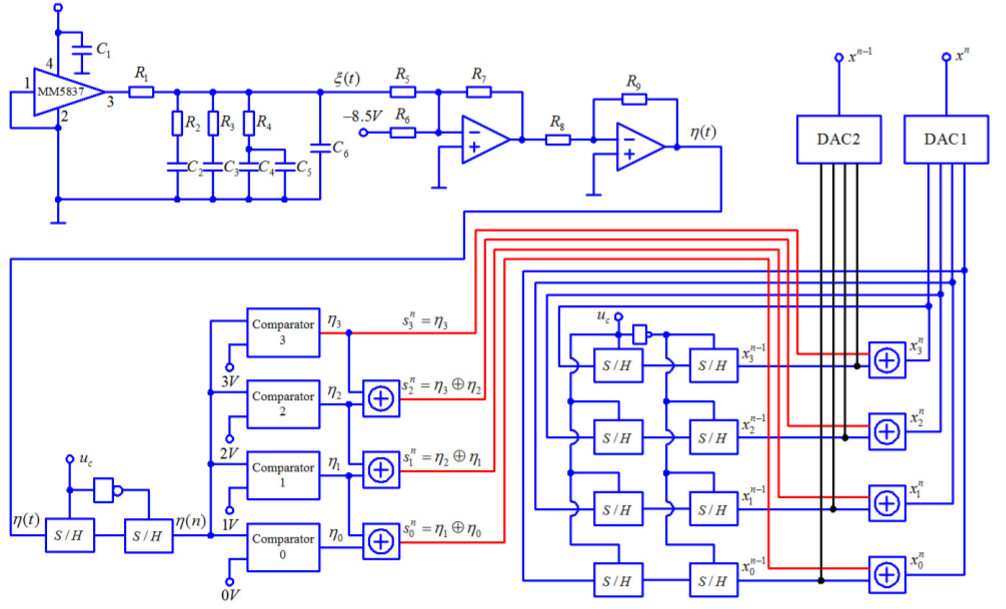}
\DeclareGraphicsExtensions.
\caption{The general IDCS circuit}
\label{IDCS1}
\end{figure}
\begin{figure}[!htb]
\centering
\includegraphics[width=3in]{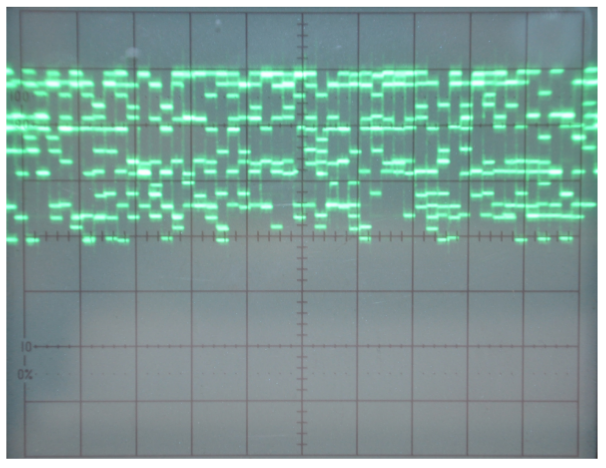}
\DeclareGraphicsExtensions.
\caption{The experimental observations of IDCS}
\label{IDCS2}
\end{figure}

where $s^n\in\{0,1,2,\ldots N-1\}={0,1,2,3}$ and $i=0,1,2,3$. 
By comparing Equation~\ref{Formula1} to Equation~\ref{Formula7}, Equation~\ref{Formula7} is equivalent to another type of mathematical expression as follows: 
 \begin{equation}
\left\{
\begin{array}{l}
x^n_3=x^{n-1}_3\oplus s^n_3\\
x^n_2=x^{n-1}_2\oplus s^n_2\\
x^n_1=x^{n-1}_1\oplus s^n_1\\
x^n_0=x^{n-1}_0\oplus s^n_0
\end{array}%
\right.
\label{Formula8}
\end{equation}
which totally corresponds to the chaotic iterations that have been studied in the first part of this article. According to Equation~\ref{Formula8}, we obtain the corresponding circuit design iteration equation shown in Figure~\ref{ifc}.
Digital to analog  converter is shown in Figure~\ref{dtac}, where $R_{17}=10k\Omega$, $R_{18}=2k\Omega$, $R_{19}=60k\Omega$, and  $R_{20}=R_{21}=10k\Omega$. A DAC0832 is used, it should be configured to allow the analog output $x^n$ to continuously reflect the state of an applied digital $D_3D_2D_1D_0=x^n_3x^n_2x^n_1x^n_0$ on Flow-Through Operation. The logic relationship is: when the input is $x^n_3x^n_2x^n_1x^n_0=0000$, the output is $x^n=0V$; when the input is $x^n_3x^n_2x^n_1x^n_0=0001$, the output is $x^n=1V$; ...; when the input is $x^n_3x^n_2x^n_1x^n_0=1111$, the output is $x^n=15V$. The above correspondence can be adjust by the resistance $R_{19}$.
Based on Figure~\ref{unsg}-Figure~\ref{dtac}, the whole basic IDCS circuit design is shown in Figure~\ref{IDCS1}, with experimental observations of IDCS as shown in Figure~\ref{IDCS2}, respectively.
\section{Conclusions} 
\label{Conclusion}  
In order to solve degradation of chaotic dynamic properties
by finite precision effect in traditional RDCS, a novel approach for generating IDCS and its proof of the existence of chaos according to Devaney's definition is presented. We then focus on the design and circuit implementation of IDCS, with theoretical background and practical details presented together for the first time. By following these directions,  hardware realization method will be further developed in the information hiding field of applications for IDCS.

\section*{Acknowledgments}  
This work was supported by the National Natural Science Foundation of China under Grant 61172023; by the Specialized Research Foundation of Doctoral Subjects of Chinese Education Ministry under Grant 20114420110003; and by China Postdoctoral Science Foundation No.2014M552175.
\bibliographystyle{plain}
\bibliography{mabase}

\end{document}
%
% ****** End of file aiptemplate.tex ******